\newtheorem{assumption}{Assumption}
\newtheorem{definition}{Definition}
\newtheorem{theorem}{Theorem}
\newtheorem{remark}{Remark}
\newtheorem{lemma}{Lemma}
\newtheorem{corollary}{Corollary}
\newtheorem{proposition}{Proposition}
\newcommand{\eq}{\ensuremath{_{\text{eq}}}}
\newcommand{\lqr}{\ensuremath{_{\text{lqr}}}}
\newcommand{\inv}{\ensuremath{_{\text{inv}}}}
\newcommand{\xin}{\ensuremath{\mathcal{X}_{\text{in}}}}
\newcommand{\rnx}{\ensuremath{\mathbb{R}^{n_x}}}
\newcommand{\rnu}{\ensuremath{\mathbb{R}^{n_u}}}
\newcommand{\hm}{\ensuremath{\hphantom{-}}}
\title{\LARGE \bf Stability and feasibility of neural network-based controllers via output range analysis}
\author{ Benjamin Karg and Sergio Lucia
\thanks{B. Karg and S. Lucia are with the Laboratory of Internet of Things for Smart Buildings,
Technische Universit\"{a}t Berlin, Einsteinufer 17, 10587 Berlin, Germany and Einstein Center Digital Future.
(e-mail: {\tt\small \{benjamin.karg, sergio.lucia\}@tu-berlin.de)}}
}
\begin{document}
\maketitle
\thispagestyle{empty}
\pagestyle{empty}
\begin{abstract}

Neural networks can be used as approximations of several complex control schemes such as model predictive control. We show in this paper which properties deep neural networks with rectifier linear units as activation functions need to satisfy to guarantee constraint satisfaction and asymptotic stability of the closed-loop system. To do so, we introduce a parametric description of the neural network controller and use a mixed-integer linear programming formulation to perform output range analysis of neural networks.
We also propose a novel method to modify a neural network controller such that it performs optimally in the LQR sense in a region surrounding the equilibrium. 
The proposed method enables the analysis and design of neural network controllers with formal safety guarantees as we illustrate with simulation results.
\end{abstract}


\section{Introduction}

The success stories of deep learning have recently motivated the idea of using neural networks as controllers, although it was already proposed to approximate nonlinear model predictive control in~\cite{parisini1995receding}.
Two main methods exist for deriving neural network-based controllers: reinforcement learning (RL)~\cite{recht2019tour} and imitation learning (IL).
Reinforcement learning directly computes optimal control policies by \emph{trying} different inputs on a simulator of the reality and computing the obtained reward, which is then used to compute the next inputs.  RL was able to reach super-human performance in playing arcade games~\cite{mnih2015human}.
While the methods for arcade games only handled discrete control inputs, different methods have been proposed~\cite{lillicrap2015continuous} to extend the application of RL to continuous control inputs. Recently, other works have considered probabilistic safe learning and exploration of RL-based controllers using ideas from safety tubes, chance constraints and reachability of terminal sets~\cite{wabersich2019probabilistic,koller2018learning}.

Neural networks derived via imitation learning are usually obtained by minimizing the error between the output of the network and an optimal control input, which can be obtained by solving an optimization problem. This strategy has been shown to achieve comparable performance to exact optimal control schemes, but usually only probabilistic safety certificates can be derived including information of the dual problem~\cite{zhang2019safe} or using probabilistic verification approaches approaches~\cite{karg2019probabilistic,hertneck2018learning,nubert2020safe,wabersich2018linear,karg2019learning}.
A different alternative to achieve guarantees of neural network-baesd approximate controllers is to perform an additional projection onto the feasible set~\cite{chen2018approximating,karg2018efficient} or use the network to compute the initial guess for an optimization solver~\cite{chen2019large} instead of using it as a controller.

To obtain deterministic guarantees about the performance of neural network controllers, methods from output range analysis were recently used for the analysis of the behaviour of the controlled system~\cite{dutta2018learning,dutta2019reachability,xiang2018reachable}.
By predicting the reachable set of the controlled system, safe application of the neural network controller can be stated or rejected.
Formulations providing exact bounds for the output range of neural networks are satisfiability (SAT)~\cite{pulina2012challenging}, satisfiability modulo theory (SMT)~\cite{scheibler2015towards,katz2017reluplex,huang2017safety,ehlers2017formal} and mixed-integer programming (MIP)~\cite{dutta2018output,anderson2020strong}.
Because these formulations result in $\mathcal{NP}$-hard problems, methods with a reduced computational load giving approximate bounds based on semi-definite programming (SDP)~\cite{fazlyab2019safety,fazlyab2019probabilistic} and other relaxations~\cite{dvijotham2018dual,salman2019convex,lomuscio2017approach} were developed.

In this work, by combining methods from output range analysis with notions from optimal control design, we show how closed-loop constraint satisfaction and asymptotic stability for neural network controllers can be guaranteed, without the need for additional strategies such as a projection to the feasible space or a backup controller.
The main novel contributions of this work are (i) the derivation of requirements for neural network controllers to guarantee asymptotic stability,  and (ii) the extension of an MILP formulation to verify closed-loop constraint satisfaction and asymptotic stability, and (iii) an optimization-based modification of a previously trained neural network-controller to ensure that  the closed-loop satisfies the requirements for asymptotic stability and constraint satisfaction.

The remainder of the paper is organized as follows.
Section~\ref{sec:problem_statement} states the considered problem and 
the MILP formulation for the verification of the neural network controllers is presented in Section~\ref{sec:MILP}.
In Section~\ref{sec:safety}, the necessary requirements for the safe application of the neural network controller are derived.
Section~\ref{sec:NN_LQR} shows how a neural network controller can be modified, such that it performs like a LQR controller in an environment containing the equilibrium point.
The contributions of this work will be demonstrated and visualized with a case study in Section~\ref{sec:case_study} and the paper is concluded in Section~\ref{sec:conclusions}.

\section{Problem statement}\label{sec:problem_statement}

\subsection{Notation}
The set of natural numbers $\{1, \dots, L\}$ with $L \in \mathbb{N}$ is denoted by $[L]$.
The real scalars, vectors and matrices are denoted by $\mathbb{R}$, $\mathbb{R}^n$ and $\mathbb{R}^{n \times m}$, respectively, with $n, m \in \mathbb{N}$.
The positive semi-definite and positive definite real matrices of dimension $n \times n$ are given by $\mathbb{S}_{+}^{n}$ and $\mathbb{S}_{++}^{n}$, respectively.
A superscript in parentheses with two elements refers to the element in  the $i$-th row and the $j$-th column of a matrix, e.g. $W^{(i,j)}$, and the $i$-th row of a matrix $W^{(i)}$ or a vector $b^{(i)}$ is denoted by a single element in parentheses in the superscript.
The interior of a set is denoted by $\text{int}(\cdot)$ and the composition of two functions $f$ and $g$ is given by $f \circ g(\cdot) = f(g(\cdot))$.
The symbol $\odot$ denotes element-wise multiplication of two scalars, vectors or matrices of same dimension.
The vector of ones is denoted by $\mathbf{1}$ and $I$ is the identity matrix of corresponding size. 
The term $\mathcal{Y} = f(\mathcal{X})$ is short for $\mathcal{Y} \coloneqq \{y \in \mathbb{R}^{n_y} \, | \, y = f(x), \, \forall \, x \in \mathcal{X}\}$.

\subsection{Optimal control}
We consider linear time-invariant systems
\begin{align}\label{eq:lti}
    x_{k+1} = A x_k + B u_k,
\end{align}
where $x_k \in \rnx$ are the states and $u_k \in \rnu$ are the inputs at time step $k$, $A \in \mathbb{R}^{n_x \times n_x}$ is the state matrix, $B \in \mathbb{R}^{n_x \times n_u}$ is the input matrix and the pair $(A,B)$ is stabilizable.
The classical goal of optimal control is to drive the system to the equilibrium $x\eq$, which is assumed to be at the origin w.l.o.g., by minimizing the objective
\begin{align}\label{eq:objective_oc}
    J = \sum_{k=0}^{\infty} x_k^T Q x_k + u_k^T R u_k,
\end{align}
where $Q \in \mathbb{S}_{+}^{n_x}$ and $R \in \mathbb{S}_{++}^{n_u}$ are the state and input weight matrices.
If the optimal control problem (OCP) is unconstrained, the optimal solution is given by the discrete-time infinite horizon linear-quadratic regulator (LQR), where a state feedback law 
\begin{align}\label{eq:lqr_feedback}
 u_k = -K\lqr x_k,
\end{align}
is applied.
If polytopic state constraints $\mathcal{X} = \{ x \in \rnx \, | \, C_x x \leq c_x  \}$, with $C_x \in \mathbb{R}^{n_{cx} \times n_x}$, $c_x \in \mathbb{R}^{n_{cx}}$, and input constraints $\mathcal{U} = \{ u \in \rnu \, | \, C_u u \leq c_u  \}$, with $C_u \in \mathbb{R}^{n_{cu} \times n_u}$, $c_u \in \mathbb{R}^{n_{cu}}$, are present, with $x\eq \in \text{int}(\mathcal{X})$ and $0 \in \text{int}(\mathcal{U})$, the violation-free application of the LQR feedback law is reduced to a region around the equilibrium state $x\eq$  where no constraints are active.
The LQR admissible region $\mathcal{R}\lqr$ is described as a polytopic set~\cite{bemporad2002explicit}:
\begin{align}\label{eq:lqr_set}
    \mathcal{R}\lqr \coloneqq \{x \in \rnx \, | \, F\lqr x \leq g\lqr\},
\end{align}
where $F\lqr \in \mathbb{R}^{m\lqr \times n_x}$ and $g\lqr \in \mathbb{R}^{m\lqr}$.

In order to derive a controller satisfying the constraints and performing optimally outside of this region, a model predictive control (MPC)~\cite{rawlings2009model} scheme can be formulated:
\begin{subequations}\label{eq:mpc}
\begin{align}
    & \underset{\mathbf{x},\mathbf{u}}{\text{minimize}} & & x_N^T P x_N + \sum_{k = 0}^{N-1} x_k^T Q x_k + u_k^T R u_k  \\
& \text{subject to} & & x_0 = x_{\text{init}}, \\
&&& x_{k+1} = A x_k + B u_k, \quad \forall k \in [N],  \\
&&& x_k \in \mathcal{X}, \, \qquad \qquad \qquad \forall k \in [N], \\
&&& u_k \in \mathcal{U}, \, \, \quad \quad \qquad \qquad \forall k \in [N-1], \\
&&& x_N \in \mathcal{R}\lqr, 
\end{align}
\end{subequations}
where $N$ is the horizon of the MPC scheme, $P \in \mathbb{R}^{n_x \times n_x}$ is the terminal weight matrix and $\mathbf{x} = [x_0^T, \dots, x_{N}^T]^T$ and $\mathbf{u} = [u_0^T, \dots, u_{N-1}^T]^T$ are the concatenation of the state and control decision variables.
Problem~\eqref{eq:mpc} is solved in every control instant and the optimal control input $u_0^*$ is applied to the system~\eqref{eq:lti}.
The goal of many learning-based controllers is to approximate the solution of~\eqref{eq:mpc}, which is a piece-wise linear function mapping the initial states $x_{\text{init}}$ to an input $u_0^*$~\cite{bemporad2002explicit}.

\subsection{ReLU networks}
By using a feed-forward neural networks $\mathcal{N}: \rnx \rightarrow \rnu$ with ReLU activation functions, the solution of~\eqref{eq:mpc} is approximated~\cite{karg2018efficient}, where each input of the network is a state $x$ and the output of the network corresponds to the control input $u_0$.
A neural network has $L$ hidden layers with $n_l$, $l \in [L]$, neurons per hidden layer, $n_x$ input neurons, $n_u$ output neurons and is defined as the function composition:
\begin{align}\label{eq:neural_network}
\begin{split}
\mathcal{N}(x;\theta) = \qquad \qquad \qquad \qquad \qquad \qquad \qquad \qquad \qquad\\
\quad \bigg \{
\begin{array}{lll}
		f_{L+1} \circ \sigma_L \circ f_L \circ \dots \circ \sigma_1 \circ f_1(x) & \text{for} & L \geq 2, \\
		f_{L+1} \circ \sigma_1 \circ f_1(x), & \text{for} & L = 1,
\end{array}
\end{split}
\end{align}
where each $f_l$ is an affine transformation of the output of the previous layer:
\begin{align}
    f_l(\xi_{l-1}) = W_l \xi_{l-1} + b_l,
\end{align}
where $\xi_{l-1} = W_{l-1} \xi_{l-2} + b_{l-1}$ for $l = 2, \dots, L+1$ and the initial input to the first layer is a state $\xi_0 = x$.
The nonlinear ReLU function $g_l$ returns the element-wise maximum between zero and the affine function of a layer $l$:
\begin{align}\label{eq:relu}
    \sigma_l(f_l) = \text{max}(0,f_l).
\end{align}
The set of parameters $\theta = \{\theta_1,\dots,\theta_{L+1}\}$, with $\theta_l = \{W_l, b_l\}$, contains the weights
\begin{align}
W_l \in
	\begin{cases}
		\mathbb{R}^{n_1 \times n_x} & \text{if} \quad l=1, \\
		\mathbb{R}^{n_l \times n_{l-1}} & \text{if} \quad l=2,\dots,L, \\
		\mathbb{R}^{n_u \times n_L} & \text{if} \quad l=L+1,
	\end{cases}
\end{align}
and biases
\begin{align}
b_l \in
	\begin{cases}
		\mathbb{R}^{n_l} & \text{if} \quad l=1,\dots,L, \\
		\mathbb{R}^{n_u} & \text{if} \quad l = L+1,
	\end{cases}
\end{align}
which define the piece-wise affine function that the neural network describes.

By viewing the neurons as hyperplanes~\cite{hanin2019deep}, an activation pattern can be defined, which assigns a binary value to every neuron in  the hidden layer to model the ReLU function~\eqref{eq:relu}.
This limits the maximum possible number of different activation patterns to
\begin{align}\label{eq:max_number_activation_patterns}
    n_{\text{act,max}} = 2^{n_{\text{neurons}}},
\end{align}
where $n_{\text{neurons}} = \sum_{i=1}^{L}n_l$.
Each activation pattern is a collection of $\Gamma_i = \{\gamma_{i,1}, \dots, \gamma_{i,L} \}$, $i \leq n_{\text{act,max}}$, of $L$ vectors $\gamma_{i,l} \in [0, 1]^{n_l}$, one for each hidden layer.  

The activation pattern corresponding to a state $x$ can be derived via:
\begin{align}\label{eq:activation_vector}
     G(x) &\coloneqq \left\{ \beta \circ f_l(\xi_{l-1}) \in [0,1]^{n_l} \, | \, l \in [L], \xi_0 = x \right\}.
\end{align}
The function $\beta(\cdot)$ converts each element $i$ of the output of a layer to a binary representation based on the output of the previous layer:
\begin{align}\label{eq:beta}
    \beta \circ f_l(\xi_{l-1})^{(i)} =\begin{cases}
                    1 &\text{if } W_l^{(i)} \xi_{l-1} + b_l^{(i)} \geq 0,\\
                    0 &\text{else}.
              \end{cases}
\end{align}
Each activation pattern $\Gamma_i$ implicitly describes a polytopic region in the state space via
\begin{align}
    \mathcal{R}_{\Gamma_i} = \{ x \in \rnx \, | \, \Gamma_i = G(x) \}.
\end{align}

The activation patterns $\Gamma_i$ enable a parametric description of the neural network: 
\begin{align}\label{eq:neural_network_param}
    \begin{split}
        \mathcal{P}(x,\Gamma_i,l;\theta) = \qquad \qquad \qquad \qquad \qquad \qquad \qquad \qquad \qquad\\
        \, \begin{cases}
            f_{l} \circ \gamma_{i,l-1} \odot f_{l-1} \circ \dots \circ \gamma_{i,1} \odot f_1(x), &\text{if } l = L+1, \\
            \gamma_{i,l} \odot f_l \circ \dots \circ \gamma_{i,1} \odot f_1(x), &\text{else}.
        \end{cases}
    \end{split}
\end{align}
which results in an affine function of the state:
\begin{align}\label{eq:parametric_affine_feedback}
\mathcal{P}(x,\Gamma_i,l;\theta) = W_{\Gamma_i,l} x + b_{\Gamma_i,l},
\end{align}
where $W_{\Gamma_i,l} \in \mathbb{R}^{n_l \times n_x}$ and $b_{\Gamma_i,l} \in \mathbb{R}^{n_l}$.
Each activation pattern $\Gamma_i$ defines a possibly non-unique collection of hyperplanes:
\begin{align}\label{eq:computation_hyperplanes}
    D(\Gamma_i) \coloneqq \{-W_{\Gamma_i,l} x \leq -b_{\Gamma_i,l}^{(i)} \, | \, \forall \, \gamma_{i,l}^{(i)} = 1, \forall \, l \in [L]\}.
\end{align}
By stacking all hyperplanes of~\eqref{eq:computation_hyperplanes} and discarding the redundant ones, the description results in the $H$-representation of polytopic regions:
\begin{align}\label{eq:unique_description_region}
    \mathcal{R}_i \coloneqq \{x \in \rnx \, | \,F_i x \leq g_i \},
\end{align}
where $F_i \in \mathbb{R}^{m_i \times n_x}$, $g_i \in \mathbb{R}^{m_i}$ and each row in $F_i$ and $g_i$ corresponds to one of the $m_i \in \mathbb{N}$ non-redundant elements in~\eqref{eq:computation_hyperplanes}, e.g. $F_i^{(j)} = -W_{\Gamma,l}^{(j)}$ and $g_i^{(j)} = -b_{\Gamma,l}^{(j)}$, if $\gamma_l^{(j)} = 1$.

The parametric description of a ReLU network described in this section implies that the network can be seen as an affine function of the input, which includes additional binary variables to describe the ReLU function~\eqref{eq:relu}. This fact will be exploited in the next subsection to formulate a mixed-integer linear program that can compute the set of possible outputs of the network for a given set of inputs.

\section{Verification of neural network-controlled systems via MILP}\label{sec:MILP}
To analyze the safety features of a neural network controller, methods from output range analysis are used.
The goal of output range analysis is to verify if the output of a neural network lies within a desired output set $\mathcal{Y}$ for a given input set $\xin$, i.e. $\mathcal{N}(x_0) \subseteq \mathcal{Y}$ for all $x_0 \in \xin$, where
\begin{align}\label{eq:input_set}
\xin \coloneqq \{ x \in \rnx \, | \, C_{\text{in}} x \leq c_{\text{in}} \},
\end{align}
with $C_{\text{in}} \in \mathbb{R}^{n_{\text{in}} \times n_x}$ and $c_{\text{in}} \in \mathbb{R}^{n_{\text{in}}}$.
Because the output of a neural network controller is per our definition the control input to the system, output range analysis methods can be directly applied to verify the satisfaction of the input constraints.
To do so,
\begin{align}\label{eq:input_constraint_satisfaction}
    C_u \mathcal{N}(x) \leq c_u,  \quad \forall x \in \xin,
\end{align}
must be satisfied for all $n_{cu}$ hyperplanes.
By using the formulation from~\cite{dutta2018output},~\eqref{eq:input_constraint_satisfaction} can be formulated as the following MILPs for each $i \in [n_{cu}]$:
\begin{subequations}\label{eq:MILP_input}
\begin{align}
    & \underset{z,t,u_{0,i},x_0}{\text{maximize}} & & C_u^{(i)} u_{0,i} \label{eq:objective_MILP_input}\\
& \text{subject to} & &  C_{\text{in}} x_0 \leq c_{\text{in}} , \label{eq:initial_set_MILP}\\
&&& z_{0} = x_0, \label{eq:initial_relu_input} \\
&&& u = W_{L+1} z_{L} + b_{L+1}, \label{eq:output_layer} \\
&&& \text{for all } l \in [L]: \nonumber\\
&&& \quad z_{l} \geq W_l z_{l-1} + b_l, \label{eq:relu_geq}\\
&&& \quad z_l \leq W_l z_{l-1} + b_l + M t_l, \label{eq:relu_leq_M}\\
&&& \quad z_{l} \geq 0, \label{eq:relu_geq_0}\\
&&& \quad z_{l} \leq M(\mathbf{1} - t_{l}), \label{eq:relu_0}\\
&&& \quad t_{l} \in [0,1]^{n_l}, \label{eq:cons_binary}
\end{align}
\end{subequations}
where the constraints~\eqref{eq:relu_geq}-\eqref{eq:cons_binary} model the ReLU layers of the neural network controller and~\eqref{eq:output_layer} models the linear output layer.
The constraint~\eqref{eq:initial_set_MILP} guarantees, that $x_0 \in \xin$.
The outputs of each ReLU layer are $z = [z_{0}^T,\dots,z_{L}^T]^T$ and the binary variables are contained in $t = [t_{0}^T, \dots, t_{L}^T]^T$.
The scalar $M \in \mathbb{R}$ has to be larger than the maximum possible output of any neuron $z_{l}^{(i)}$.
By solving~\eqref{eq:MILP_input} globally for all $n_{cu}$ hyperplanes of $\mathcal{U}$, the vector of optimal values of the cost function $c_{u}^* = [C_u^{(0)}u_{0,0}^*, \dots, C_u^{(n_{cu})}u_{0,n_{cu}}^*]^T$ is obtained.
If the resulting set $\mathcal{U}^* \coloneqq \{ y \in \mathbb{R}^{n_y} \, | \, C_{u} y \leq c_{u}^* \}$ provides $\mathcal{U}^* \subseteq \mathcal{U}$, then~\eqref{eq:input_constraint_satisfaction} is guaranteed by construction of the MILPs~\eqref{eq:MILP_input} as proven in~\cite{dutta2018output}.
The process is illustrated on the left side of Fig.~\ref{fig:flowchart_milp}. 

\begin{figure*}[t]
\begin{center}
\includegraphics[width=0.95\textwidth]{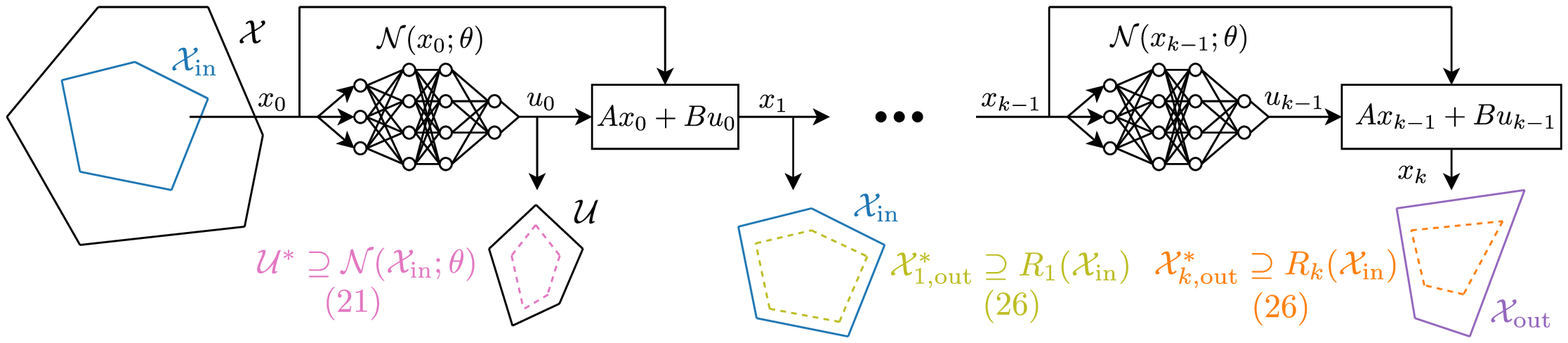}
\caption{Schematic overview of the closed-loop~\eqref{eq:closed_loop} and visualization of the output ranges computed via the MILPs~\eqref{eq:MILP_input} and~\eqref{eq:MILP_state}. The derived sets enable the verification of safety features like closed-loop constraint satisfaction and asymptotic stability.}
\label{fig:flowchart_milp}
\end{center}
\end{figure*}

To analyze further properties of the closed-loop system which are defined in the state space,~\eqref{eq:MILP_input} needs to extended.
We use a set-based formulation of the system evolution to investigate closed-loop constraint satisfaction as well as asymptotic stability.
The behaviour of system~\eqref{eq:lti} controlled by a neural network controller~\eqref{eq:neural_network} is described by:
\begin{align}\label{eq:closed_loop}
    x_{k+1} = f_{\text{cl}}(x_k),
\end{align}
with $f_{\text{cl}}(x_k) = A x_k + B \mathcal{N}(x_k;\theta)$.
We define the $k$-step reachable set as the $k$-time composition of the closed-loop with respect to an initial state set $\xin \subseteq \mathcal{X}$:
\begin{align}\label{eq:k_step_reachable_set}
    R_k(\xin) = f_{\text{cl}} \circ \dots \circ f_{\text{cl}}(\xin).
\end{align}
The desired output set in the state space can be defined by:
\begin{align}\label{eq:input_set}
\mathcal{X}_{\text{out}} \coloneqq \{ x \in \rnx \, | \, C_{\text{out}} x \leq c_{\text{out}} \},
\end{align}
where $C_{\text{out}} \in \mathbb{R}^{n_{\text{out}} \times n_x}$ and $c_{\text{out}} \in \mathbb{R}^{n_{\text{out}}}$.
To verify that the closed-loop system converges to the desired output set, the following condition must be satisfied:
\begin{align}\label{eq:state_constraint_satisfaction_MILP}
    C_{\text{out}} \cdot (f_{\text{cl}} \circ \dots \circ f_{\text{cl}}(x)) \leq c_{\text{out}}, \quad \forall x \in \xin.
\end{align}
By including additional constraints that model the closed-loop of the controlled system to~\eqref{eq:MILP_input}, the following $i \in [n_{\text{out}}]$ MILPs enable the analysis of the  $k$-step reachable set:
\begin{subequations}\label{eq:MILP_state}
\begin{align}
    & \underset{\mathbf{z},\mathbf{t},\mathbf{u},x_0,x_{k,i}}{\text{maximize}} & & C_{\text{out}}^{(i)} x_{k,i} \label{eq:objective_MILP_state}\\
& \text{subject to} & &  C_{\text{in}} x_0 \leq c_{\text{in}} , \\
&&& x_{k,i} = A z_{k,0} + B u_k,  \label{eq:xk_MILP_state} \\
&&& z_{1,0} = x_0, \label{eq:initial_relu_input_state} \\
&&& z_{j,0} = A z_{j-1,0} + B u_{j-1}, \, \forall \, j \in [k] \setminus 1, \label{eq:NN_LTI} \\
&&& \text{for all } j \in [k] \text{ and } l \in [L]: \nonumber\\
&&& \quad z_{j,l} \geq W_l z_{l-1,j} + b_l, \\
&&& \quad z_l \leq W_l z_{l-1,j} + b_l + M t_l, \\
&&& \quad z_{j,l} \geq 0, \\
&&& \quad z_{j,l} \leq M(\mathbf{1} - t_{j,l}), \\
&&& \quad t_{j,l} \in [0,1]^{n_l}, \\
&&& \quad u_j = W_{L+1} z_{j,L} + b_{L+1},
\end{align}
\end{subequations}
where the ReLU outputs and the binary variables are collected in $\mathbf{z} = \{z_0, \dots, z_k \}$ with $z_j = [z_{j,0}^T,\dots,z_{j,L}^T]^T, j \in [k]$, and $\mathbf{t} = \{t_1, \dots, t_k\}$ with $t_j = [t_{j,0}^T, \dots, t_{j,L}^T]^T, j \in [k]$.
The dynamics of the closed-loop system are modelled via the constraints~\eqref{eq:xk_MILP_state}-\eqref{eq:NN_LTI}.
Analogously to~\eqref{eq:MILP_input}, the $n_{\text{out}}$ solutions of~\eqref{eq:MILP_state} define the set $\mathcal{X}_{k,\text{out}}^* \coloneqq\{ x \in \rnx \, | \, C_{\text{out}} x \leq c_{\text{out}}^* \}$ with $c_{\text{out}}^* = [C_{\text{out}}^{(0)}x_{k,0}^*, \dots,C_{\text{out}}^{(n_{\text{out}})}x_{k,n_{\text{out}}}^*]^T$.
If $\mathcal{X}_{k,\text{out}}^* \subseteq \mathcal{X}_{\text{out}}$, then~\eqref{eq:state_constraint_satisfaction_MILP} is satisfied.
Two exemplary usages of~\eqref{eq:MILP_state} are visualized in Fig.~\ref{fig:flowchart_milp}, where the one step reachable set is $R_1(\xin)$ is used in combination with~\eqref{eq:MILP_input} to verify control-invariance in the middle part of the figure. The right part of Fig.~\ref{fig:flowchart_milp} shows how~\eqref{eq:MILP_state} can be used to verify that the k-step reachable set is contained in a desired output set $\mathcal{X}_{\text{out}}$.

\begin{remark}
By choosing the objective functions~\eqref{eq:objective_MILP_input} and~\eqref{eq:objective_MILP_state}, the complexity of the computed output set $\mathcal{Y}^* \coloneqq \{\mathcal{U}^*,\mathcal{X}_{\text{out}}^*\}$, defined by the number of hyperplanes, can be fixed.
Since the solutions of~\eqref{eq:MILP_input} and~\eqref{eq:MILP_state} provide exact bounds for every hyperplane containing the true output set $\mathcal{Y}_{\text{true}} = \{\mathcal{N}(\xin;\theta), R_k(\xin) \}$, if $\mathcal{Y}^*$ and $\mathcal{Y}_{\text{true}}$ have the same complexity and the hyperplanes have the same directions, $\mathcal{Y}_{\text{true}} = \mathcal{Y}^*$ is satisfied.
If the number and directions of the hyperplanes defining $\mathcal{Y}_{\text{true}}$ and $\mathcal{Y}^*$ are not the same, the optimal solution is an over-approximation of the real output set with $\mathcal{Y}_{\text{true}} \subset \mathcal{Y}^*$.
\end{remark}

\section{Safety guarantees}\label{sec:safety}
In the previous section, we proposed an MILP formulation that enables modelling the closed-loop behavior that a neural network-controller, which is usually derived as an approximation of a complex MPC controller.
In this section, we show how the proposed formulation can be used to guarantee constraint satisfaction and convergence to the equilibrium point of the closed-loop system for a set of initial states $\xin$.
\begin{assumption}\label{ass:initial_state_constraints}
    The initial state set $\xin \subseteq \mathcal{X}$ is a subset of the feasible state space.
    This means that for the initial state $x_0 \in \xin$ the state constraints are always satisfied.
\end{assumption}
The derivations of the properties for safety are based on the definition of a control-invariant set for a neural network controlled system.
\begin{definition}\label{def:control_invariant_set}
    A polytope defining a region in the state space:
    \begin{align}\label{eq:control_invariant_set}
        \mathcal{C}\inv \coloneqq \{ x \in \rnx \, | \, C\inv x \leq c\inv\},
    \end{align}
    with $C\inv \in \mathbb{R}^{m\inv \times n_x}$ and $c\inv \in \mathbb{R}^{m\inv}$, is an admissible control-invariant set if: 
    \begin{align}
        \mathcal{N}(\mathcal{C}\inv;\theta) &\subseteq \mathcal{U}, \\
        R_1(\mathcal{C}\inv) &\subseteq \mathcal{C}\inv.
    \end{align}
\end{definition}
If a set is an admissible control-invariant set, all closed-loop trajectories starting at $x_0 \in \mathcal{C}\inv$ will satisfy $x_k \in \mathcal{C}\inv$ for all $k \geq 0$.

\subsection{Guaranteed input constraint satisfaction}
%
By globally solving~\eqref{eq:MILP_input}, the input constraints can be directly verified. 
In addition, if the input constraints can be described by box constraints $\mathcal{U} \coloneqq \{u \in \rnu \, | \, u_{\text{lb}} \leq u \leq u_{\text{ub}} \}$, then the neural network controller can be easily modified to enforce that the input constraints are satisfied for all $x \in \rnx$ as proven in the following Proposition.
\begin{proposition}\label{prop:output_saturation}
    For box constraints $\mathcal{U} \coloneqq \{u \in \rnu \, | \, u_{\text{lb}} \leq u \leq u_{\text{ub}} \}$, each neural network controller $\mathcal{N}(x;\theta)$ with $L$ hidden layers can be adapted such that the modified neural network controller $\mathcal{N}_{\text{sat}}(x) \coloneqq \mathcal{N}(x;\theta_{\text{sat}})$ with $L_{\text{sat}} = L+2$ hidden layers satisfies the box input constraints $\mathcal{U}$ for all $x \in \mathbb{R}^{n_x}$ while all originally feasible input control signals $u_{\text{lb}} \leq \mathcal{N}_{\text{sat}}(x) \leq u_{\text{ub}}$ are left unchanged. 
    The first $L$ elements of the modified controller are equal to the original controller:
    $$\theta_{l,\text{sat}} =\theta_l \quad \forall l \in [L].$$
    The weights of the layers $L$ to $L+3$ are given by:
    \begin{align*}
        W_{L+1,\text{sat}} &= -W_{L+1}, &b_{L+1,\text{sat}} &= u_{\text{ub}} - b_{L+1}, \\
        W_{L+2,\text{sat}} &= -I,       &b_{L+2,\text{sat}} &= u_{\text{ub}} - u_{\text{lb}}, \\
        W_{L+3,\text{sat}} &= I,        &b_{L+3,\text{sat}} &= u_{\text{lb}}.
    \end{align*}
\end{proposition}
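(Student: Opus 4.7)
The plan is to show that the two appended hidden layers together with the new output layer implement exactly the componentwise saturation map $\mathrm{sat}(u) = \min(u_{\text{ub}}, \max(u_{\text{lb}}, u))$ applied to the original output $u = W_{L+1}\xi_L + b_{L+1}$, where $\xi_L$ is the activation of the $L$-th hidden layer of the original network. Since the first $L$ layers coincide with the original network by the choice $\theta_{l,\text{sat}} = \theta_l$, this immediately yields $\mathcal{N}_{\text{sat}}(x) = \mathrm{sat}(\mathcal{N}(x;\theta))$, from which both claims follow: the output always lies in $[u_{\text{lb}}, u_{\text{ub}}]$, and whenever $\mathcal{N}(x;\theta)$ already lies in this box, $\mathrm{sat}$ acts as the identity.

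First, I would propagate the original output $u$ through the new layer $L+1$. By construction $W_{L+1,\text{sat}}\xi_L + b_{L+1,\text{sat}} = u_{\text{ub}} - u$, so after the ReLU activation we get $z_{L+1} = \max(0, u_{\text{ub}} - u)$ (componentwise). Next, layer $L+2$ gives the preactivation $-z_{L+1} + (u_{\text{ub}} - u_{\text{lb}})$, and after the ReLU, $z_{L+2} = \max\!\bigl(0, (u_{\text{ub}} - u_{\text{lb}}) - \max(0, u_{\text{ub}} - u)\bigr)$. Finally, the linear output layer $W_{L+3,\text{sat}} = I$, $b_{L+3,\text{sat}} = u_{\text{lb}}$ returns $\mathcal{N}_{\text{sat}}(x) = u_{\text{lb}} + z_{L+2}$.

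I would then verify the identity $\mathcal{N}_{\text{sat}}(x) = \mathrm{sat}(u)$ by a three-case componentwise check, using $u_{\text{lb}} \leq u_{\text{ub}}$:
\begin{itemize}
\item if $u^{(i)} \geq u_{\text{ub}}^{(i)}$, then $z_{L+1}^{(i)} = 0$, $z_{L+2}^{(i)} = u_{\text{ub}}^{(i)} - u_{\text{lb}}^{(i)}$, hence $\mathcal{N}_{\text{sat}}^{(i)} = u_{\text{ub}}^{(i)}$;
\item if $u^{(i)} \leq u_{\text{lb}}^{(i)}$, then $z_{L+1}^{(i)} = u_{\text{ub}}^{(i)} - u^{(i)} \geq u_{\text{ub}}^{(i)} - u_{\text{lb}}^{(i)}$, so $z_{L+2}^{(i)} = 0$ and $\mathcal{N}_{\text{sat}}^{(i)} = u_{\text{lb}}^{(i)}$;
\item if $u_{\text{lb}}^{(i)} \leq u^{(i)} \leq u_{\text{ub}}^{(i)}$, then $z_{L+1}^{(i)} = u_{\text{ub}}^{(i)} - u^{(i)}$ and $z_{L+2}^{(i)} = u^{(i)} - u_{\text{lb}}^{(i)} \geq 0$, giving $\mathcal{N}_{\text{sat}}^{(i)} = u^{(i)}$.
\end{itemize}

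There is no real obstacle; the only delicate point is to treat the two layers componentwise and use the assumption $u_{\text{lb}} \leq u_{\text{ub}}$ to keep the arguments of the inner and outer ReLU in the right sign regimes. Once the case analysis above is written out, both conclusions of the proposition, namely box-constraint satisfaction for arbitrary $x \in \mathbb{R}^{n_x}$ and invariance of originally feasible outputs, are immediate corollaries of $\mathcal{N}_{\text{sat}}(x) = \mathrm{sat}(\mathcal{N}(x;\theta))$.
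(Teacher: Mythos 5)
Your proposal is correct and follows essentially the same route as the paper: propagating the original output $u$ through the two appended ReLU layers yields the composed saturation map $s(\tilde u)=\max(-\max(-\tilde u+u_{\text{ub}},0)+u_{\text{ub}}-u_{\text{lb}},0)+u_{\text{lb}}$, which the paper also writes down and evaluates by the same three componentwise cases. Your version merely spells out the case checks that the paper states without detail, so there is no substantive difference.
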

\begin{proof}[Proposition~\ref{prop:output_saturation}]
    By substituting the original input signal of the network via $\tilde{u} \coloneqq \mathcal{P}(x,\Gamma_i,L+1;\theta)$, the adapted network results in the function:
    $$s(\tilde{u}) =  \text{max}(-\text{max}(-\tilde{u}+u_{\text{ub}},0)+u_{\text{ub}} - u_{\text{lb}},0)+ u_{\text{lb}}.$$
    The possible outcome of each element $i$ of the function
    $$s(\tilde{u})^{(i)} = 	
            \begin{cases}
        		u_{\text{ub}}^{(i)} & \text{if} \quad \tilde{u}^{(i)} \geq u_{\text{ub}}^{(i)}, \\
        		\tilde{u}^{(i)} & \text{if} \quad u_{\text{lb}}^{(i)} \leq \tilde{u}^{(i)} \leq u_{\text{ub}}^{(i)}, \\
        		u_{\text{lb}}^{(i)} & \text{if} \quad \tilde{u}^{(i)} \leq u_{\text{lb}}^{(i)},
        	\end{cases}
    $$
    represent the box constraints.
    
\end{proof}

\subsection{Guaranteed state constraint satisfaction}%
The analysis of closed-loop state constraint satisfaction relies on the $1$-step reachable set $R_1(\xin) = f_{\text{cl}}(\xin)$.
\begin{lemma}\label{lem:recursive_feasibility}
If Assumption~\ref{ass:initial_state_constraints} holds, the solution of~\eqref{eq:MILP_input} satisfies $\mathcal{U}^* \subseteq \mathcal{U}$ and the solution of~\eqref{eq:MILP_state} results in $\mathcal{X}_{1,\text{out}}^* \subseteq \xin$, $\xin$ is an admissible control-invariant set guaranteeing the satisfaction of state and input constraints for all time steps $k \geq 0$.
\end{lemma}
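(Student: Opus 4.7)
The plan is to verify the two conditions of Definition~\ref{def:control_invariant_set} for the set $\xin$ and then close the argument by induction on the time step $k$.

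First I would recall what the two MILP conclusions buy us. By the construction in Section~\ref{sec:MILP}, the set $\mathcal{U}^*$ formed from the optimal values of~\eqref{eq:MILP_input} is an outer bound $\mathcal{N}(\xin;\theta) \subseteq \mathcal{U}^*$; analogously, choosing the directions of $C_{\text{out}}$ to match those defining $\xin$, the set $\mathcal{X}_{1,\text{out}}^*$ from~\eqref{eq:MILP_state} with $k=1$ is an outer bound $R_1(\xin) \subseteq \mathcal{X}_{1,\text{out}}^*$. Hence the two hypotheses of the lemma translate directly into $\mathcal{N}(\xin;\theta) \subseteq \mathcal{U}^* \subseteq \mathcal{U}$ and $R_1(\xin) \subseteq \mathcal{X}_{1,\text{out}}^* \subseteq \xin$, which are exactly the two defining properties of an admissible control-invariant set in Definition~\ref{def:control_invariant_set}. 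This already establishes that $\xin$ is control-invariant.

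Next I would run a simple induction to propagate this for all $k\geq 0$. Base case: by Assumption~\ref{ass:initial_state_constraints}, $x_0 \in \xin \subseteq \mathcal{X}$, and since $x_0 \in \xin$ we have $u_0 = \mathcal{N}(x_0;\theta) \in \mathcal{N}(\xin;\theta) \subseteq \mathcal{U}$. Inductive step: assume $x_k \in \xin$. Then $u_k \in \mathcal{N}(\xin;\theta) \subseteq \mathcal{U}$ and $x_{k+1} = f_{\text{cl}}(x_k) \in R_1(\xin) \subseteq \xin \subseteq \mathcal{X}$, closing the induction. Hence the closed-loop trajectory remains inside $\xin$ and the inputs inside $\mathcal{U}$ for every $k\geq 0$.

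The argument is mostly a bookkeeping exercise once the MILP semantics are in hand, so there is no real obstacle beyond being explicit that the output sets produced by~\eqref{eq:MILP_input} and~\eqref{eq:MILP_state} are guaranteed outer approximations (for this to be meaningful as a certificate, the hyperplane directions in $C_{\text{out}}$ must be taken to match those of $\xin$, so that $\mathcal{X}_{1,\text{out}}^* \subseteq \xin$ is a well-posed comparison of polytopes with identical facet normals). Once this identification is made, the control-invariance conditions and the induction are immediate.
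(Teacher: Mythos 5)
Your proposal is correct and follows essentially the same route as the paper's own proof: use the two MILP outer-approximation guarantees to verify the two conditions of Definition~\ref{def:control_invariant_set}, then propagate invariance over all $k \geq 0$ (the paper states this propagation directly, while you make the induction explicit). Your parenthetical about matching the facet normals of $C_{\text{out}}$ to those of $\xin$ is a sensible practical choice to reduce conservatism, but it is not logically required, since the containment test $\mathcal{X}_{1,\text{out}}^* \subseteq \xin$ is well-posed for arbitrary polytopes.
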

\begin{proof}[Lemma~\ref{lem:recursive_feasibility}]
    The solution of~\eqref{eq:MILP_input} with $\mathcal{U}^* \subseteq \mathcal{U}$ guarantees that the input constraints are satisfied for all $x \in \xin$.
    By the solution of~\eqref{eq:MILP_state} with $\mathcal{X}_{1,\text{out}}^* \subseteq \xin$, it is guaranteed that $R_1(\xin) \subseteq \xin$, which means that $\xin$ is an admissible control-invariant set by Definition~\ref{def:control_invariant_set}.
    Because $x_0 \in \xin$ with $\xin \subseteq \mathcal{X}$ by Assumption~\ref{ass:initial_state_constraints}, $R_k(\xin) \subseteq \xin$ for all $k \geq 0$, which in return guarantees $\mathcal{N}(R_k(\xin)) \subseteq\mathcal{U}$ for all $k \geq 0$.
    
    %
    
\end{proof}

\subsection{Guaranteed asymptotic stability}

To establish asymptotic stability, it is necessary to prove convergence to the equilibrium $x\eq$.
The set of hyperplanes~\eqref{eq:computation_hyperplanes}
obtained with the equilibrium activation pattern $\Gamma\eq = G(x\eq)$ via \eqref{eq:activation_vector}, results in the unique description~\eqref{eq:unique_description_region} of the polytopic region
\begin{align}\label{eq:equilibrium_region} 
    \mathcal{R}\eq = \{x \in \rnx \, | \, F\eq x \leq g\eq\}.
\end{align}
Following~\eqref{eq:parametric_affine_feedback}, the neural network controller can be represented for all $x \in \mathcal{R}\eq$ as the affine state feedback:
    \begin{align}\label{eq:equilibrium_feedback}
       \mathcal{P}(x,\Gamma\eq,L+1,\theta) &= W_{L+1} (W_{\Gamma\eq,L} x + b_{\Gamma\eq,L}) + b_{L+1}
    \end{align}
\begin{lemma}\label{thm:asymptotic_stability}
If the resulting bias of the neural network is zero, that is:
\begin{align}\label{eq:bias_annihilation}
    W_{L+1} b_{\Gamma\eq,L} + b_{L+1} = 0,
\end{align}
and the resulting weight matrix $W_{L+1} W_{\Gamma\eq}$ satisfies:
\begin{align}\label{eq:prop_stability}
    \lVert\text{eig}(A+B W_{L+1} W_{\Gamma\eq},L)\rVert_{\infty} < 1,
\end{align}
then the neural network controller is asymptotically stabilizing for all $x \in \mathcal{R}_{\text{as}}$, where $\mathcal{R}_{\text{as}}$ is a control invariant set that satisfies the following property:
\begin{subequations}
    \begin{align}\label{eq:definition_stability_set}
        \mathcal{R}_{\text{as}} &\subseteq (\mathcal{R}\eq \cap \mathcal{R}_K),
    \end{align}
\end{subequations}
where $\mathcal{R}_K$ is the region for which the application of the feedback $u = W_{L+1} W_{\Gamma\eq},L) x$ results in asymptotic convergence to the equilibrium $x\eq$ without constraint violations.
\end{lemma}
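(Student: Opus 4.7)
The plan is to reduce the proof to asymptotic stability of a linear system on a region where the neural network acts as an affine map. By construction, every $x \in \mathcal{R}\eq$ yields the equilibrium activation pattern $\Gamma\eq$, so the parametric description \eqref{eq:parametric_affine_feedback}--\eqref{eq:equilibrium_feedback} gives $\mathcal{N}(x;\theta) = W_{L+1} W_{\Gamma\eq,L}\, x + W_{L+1} b_{\Gamma\eq,L} + b_{L+1}$ on $\mathcal{R}\eq$. Applying hypothesis \eqref{eq:bias_annihilation} collapses the constant term, so the network reduces to the pure linear feedback $u = W_{L+1} W_{\Gamma\eq,L}\, x$ and the closed loop \eqref{eq:closed_loop} becomes the LTI recursion $x_{k+1} = (A + B W_{L+1} W_{\Gamma\eq,L})\, x_k$ for every $x_k \in \mathcal{R}\eq$.

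The spectral radius condition \eqref{eq:prop_stability} makes this matrix Schur stable. I would then invoke the standard discrete Lyapunov equation: there exists $P \in \mathbb{S}_{++}^{n_x}$ solving $(A+BW_{L+1}W_{\Gamma\eq,L})^T P (A+BW_{L+1}W_{\Gamma\eq,L}) - P \prec 0$, so that $V(x) = x^T P x$ is a strict Lyapunov function with $V(f_{\text{cl}}(x)) - V(x) < 0$ for all nonzero $x \in \mathcal{R}\eq$. This establishes exponential contraction toward $x\eq = 0$ along any trajectory that remains in $\mathcal{R}\eq$.

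The missing piece, and what I expect to be the main obstacle, is ensuring that the linear description is actually valid along the entire trajectory, and that no constraint is violated by the applied feedback. This is precisely what the definition of $\mathcal{R}_{\text{as}}$ encodes: since $\mathcal{R}_{\text{as}}$ is assumed to be control-invariant under $f_{\text{cl}}$ and contained in $\mathcal{R}\eq \cap \mathcal{R}_K$, any initial $x_0 \in \mathcal{R}_{\text{as}}$ generates $R_k(\mathcal{R}_{\text{as}}) \subseteq \mathcal{R}_{\text{as}} \subseteq \mathcal{R}\eq \cap \mathcal{R}_K$ for every $k \geq 0$. The inclusion in $\mathcal{R}\eq$ guarantees that the activation pattern never switches, so the LTI reduction keeps applying at every time step, while the inclusion in $\mathcal{R}_K$ guarantees by definition that the linear feedback $W_{L+1} W_{\Gamma\eq,L}$ respects $\mathcal{X}$ and $\mathcal{U}$.

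Combining the two ingredients, the trajectory of the nonlinear closed-loop system \eqref{eq:closed_loop} initialized in $\mathcal{R}_{\text{as}}$ coincides with a trajectory of the Schur-stable LTI system and satisfies $V(x_{k+1}) < V(x_k)$ for all $x_k \neq x\eq$, yielding $x_k \to x\eq$ while respecting all state and input constraints. This proves that the neural network controller is asymptotically stabilizing on $\mathcal{R}_{\text{as}}$, as claimed. In practice, verifying the hypotheses reduces to (i) checking \eqref{eq:bias_annihilation} and \eqref{eq:prop_stability} algebraically and (ii) using the MILP \eqref{eq:MILP_state} of Section~\ref{sec:MILP} with $\xin = \mathcal{R}_{\text{as}}$ to certify the control-invariance required in \eqref{eq:definition_stability_set}.
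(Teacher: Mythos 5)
Your proposal is correct and follows essentially the same route as the paper: reduce the network to the pure linear feedback $u = W_{L+1}W_{\Gamma\eq,L}x$ on $\mathcal{R}\eq$ via the bias condition, invoke Schur stability of $A+BW_{L+1}W_{\Gamma\eq,L}$, and use the control invariance of $\mathcal{R}_{\text{as}} \subseteq \mathcal{R}\eq \cap \mathcal{R}_K$ to keep the linear description and constraint admissibility valid along the whole trajectory. Your explicit Lyapunov-equation step and the remark that invariance prevents activation-pattern switching simply spell out details the paper's brief proof leaves implicit.
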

\begin{proof}[Lemma~\ref{thm:asymptotic_stability}]
    For a linear time-invariant system~\eqref{eq:lti} a state-feedback $u = -Kx$ satisfying
    \begin{align}\label{eq:prop_stability_lti}
    \lVert\text{eig}(A-BK)\rVert_{\infty} < 1
    \end{align}
    leads to asymptotic stable behavior within the corresponding admissible set $\mathcal{R}_K$.
    If a neural network controller admits an equal feedback in the equilibrium region via $K = -W_{L+1} W_{\Gamma\eq,L}$ and $W_{L+1} b_{\Gamma\eq,L} + b_{L+1} = 0$, this would also imply asymptotically stabilizing behaviour of the neural network for all $x \in \mathcal{R}_K$.
    Since the equilibrium feedback is only applied within the equilibrium region, asymptotic stability can only be guaranteed for a control-invariant set $\mathcal{R}_{\text{as}}$ within the intersection of the equilibrium region $\mathcal{R}\eq$ and the  region $\mathcal{R}_K$.
    
\end{proof}

\begin{remark}\label{rem:zero_bias}
 If the neural network controller $\mathcal{N}(x;\theta)$ has zero-bias, i.e. $b_l = 0$ for all $l=[L+1]$, then~\eqref{eq:bias_annihilation} is always satisfied.
\end{remark}

\begin{theorem}\label{cor:asymptotic_stability_for_xin}
    If the neural network controller satisfies~\eqref{eq:bias_annihilation} and~\eqref{eq:prop_stability} and the solution of~\eqref{eq:MILP_state} provides for a chosen time-step $k$ that $\mathcal{X}_{k,\text{out}}^* \subseteq \mathcal{R}_{\text{as}}$, then the closed-loop system~\eqref{eq:closed_loop} is asymptotically stable for all $x \in \xin$.
\end{theorem}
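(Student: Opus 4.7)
The plan is to chain together three facts already available in the paper: (i) Lemma~\ref{thm:asymptotic_stability} guarantees asymptotic convergence from any point inside the control-invariant region $\mathcal{R}_{\text{as}}$, (ii) the MILP~\eqref{eq:MILP_state} produces an outer bound on the $k$-step reachable set $R_k(\xin)$, and (iii) the hypothesis $\mathcal{X}_{k,\text{out}}^* \subseteq \mathcal{R}_{\text{as}}$ allows these two facts to be glued together at time step $k$. So the argument is essentially a two-phase trajectory analysis: up to step $k$, the MILP bound controls where the state can be; from step $k$ onwards, the linear LQR-like analysis of Lemma~\ref{thm:asymptotic_stability} takes over.

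First I would invoke Lemma~\ref{thm:asymptotic_stability}: since by assumption~\eqref{eq:bias_annihilation} and~\eqref{eq:prop_stability} hold, the neural network acts inside $\mathcal{R}\eq$ as the linear feedback $u = W_{L+1} W_{\Gamma\eq,L}\, x$, and on the set $\mathcal{R}_{\text{as}} \subseteq \mathcal{R}\eq \cap \mathcal{R}_K$ this feedback yields an asymptotically stable closed-loop with no constraint violation. Because $\mathcal{R}_{\text{as}}$ is control-invariant by construction, any trajectory of~\eqref{eq:closed_loop} initialized in $\mathcal{R}_{\text{as}}$ remains there for all subsequent steps and satisfies $x_j \to x\eq$ as $j \to \infty$.

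Second, I would appeal to the correctness of the MILP formulation~\eqref{eq:MILP_state}, established in the previous section: by taking the coordinate-wise maxima over the $n_{\text{out}}$ faces of the desired output polytope, the optimal values $c_{\text{out}}^*$ define a set $\mathcal{X}_{k,\text{out}}^*$ such that $R_k(\xin) \subseteq \mathcal{X}_{k,\text{out}}^*$. Combined with the hypothesis $\mathcal{X}_{k,\text{out}}^* \subseteq \mathcal{R}_{\text{as}}$, this gives $R_k(\xin) \subseteq \mathcal{R}_{\text{as}}$. Hence for every $x_0 \in \xin$ the trajectory state at time $k$ satisfies $x_k \in \mathcal{R}_{\text{as}}$, and applying the first step with initial condition $x_k$ yields $x_j \to x\eq$ for $j \to \infty$, which is exactly asymptotic stability of the closed-loop system on $\xin$.

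The only subtle point is that the two-phase argument must be consistent: during the first $k$ steps the trajectory need not lie in $\mathcal{R}_{\text{as}}$, so one must take care not to invoke the LQR-type inequality~\eqref{eq:prop_stability_lti} outside its domain of validity. This is handled simply by using the MILP over-approximation only to bound the location of $x_k$, without claiming anything about the decay of $\|x_j\|$ for $j<k$, and then restarting the stability argument from $x_k \in \mathcal{R}_{\text{as}}$. This is the step I would expect a careful reviewer to scrutinize, but it is resolved directly by the control-invariance of $\mathcal{R}_{\text{as}}$ guaranteed in Lemma~\ref{thm:asymptotic_stability}.
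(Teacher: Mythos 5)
Your proposal is correct and follows essentially the same route as the paper's proof: invoke Lemma~\ref{thm:asymptotic_stability} for asymptotic stability on $\mathcal{R}_{\text{as}}$, use the MILP over-approximation $R_k(\xin) \subseteq \mathcal{X}_{k,\text{out}}^* \subseteq \mathcal{R}_{\text{as}}$ to place $x_k$ in $\mathcal{R}_{\text{as}}$, and conclude convergence for all $x \in \xin$. Your added remark about not invoking the linear-feedback analysis before step $k$, and relying instead on the control-invariance of $\mathcal{R}_{\text{as}}$, is a correct reading of the implicit step the paper leaves unstated.
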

\begin{proof}[Theorem~\ref{cor:asymptotic_stability_for_xin}]
    Because~\eqref{eq:bias_annihilation} and~\eqref{eq:prop_stability} are satisfied, asymptotic stability for all $x \in \mathcal{R}_{\text{as}}$ follows from Lemma~\ref{thm:asymptotic_stability}.
    Since $R_k(\xin) \subseteq \mathcal{X}_{k,\text{out}}^*$ and $\mathcal{X}_{k,\text{out}}^* \subseteq \mathcal{R}_{\text{as}}$, $x_k \in \mathcal{R}_{\text{as}}$ at least after $k$ closed-loop steps~\eqref{eq:closed_loop} for all $x \in \xin$, from which follows asymptotic stability for all $x \in \xin$.
    
\end{proof}

If the feedback defined by the neural network in the region $\mathcal{R}\eq$ is equal to the LQR controller, then the neural network controller behaves optimally in the neighborhood of the equilibrium point.
This is formalized in the following result.
\begin{corollary}\label{cor:lqr_asymptotic_stability}
    If the feedback of the neural network controller in the neighborhood of the equilibrium~\eqref{eq:equilibrium_feedback} is equal to the LQR feedback:
    $$\mathcal{P}(x,\Gamma\eq,L+1;\theta) = -K\lqr x $$
    and the solution of~\eqref{eq:MILP_state} provides $\mathcal{X}_{k,\text{out}}^* \subseteq \mathcal{R}_{\text{as}}$,
    then the neural network controller drives the system optimally w.r.t~\eqref{eq:objective_oc} to the equilibrium for all $x \in \mathcal{R}_{\text{as}}$ and asymptotically to the equilibrium for all $x \in \xin$.
\end{corollary}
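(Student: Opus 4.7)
The plan is to reduce the statement to Theorem~\ref{cor:asymptotic_stability_for_xin} for the asymptotic part, and to use the defining property of the unconstrained infinite-horizon LQR for the optimality part. The hypothesis $\mathcal{P}(x,\Gamma\eq,L+1;\theta) = -K\lqr x$ gives us a purely linear feedback in $\mathcal{R}\eq$, so the first step is to verify that this immediately forces the two assumptions of Lemma~\ref{thm:asymptotic_stability}. Writing the left-hand side of~\eqref{eq:bias_annihilation} as the constant term of $\mathcal{P}(\cdot,\Gamma\eq,L+1;\theta)$ and matching it against the constant term of $-K\lqr x$ (which is zero) gives $W_{L+1} b_{\Gamma\eq,L} + b_{L+1} = 0$. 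Similarly, matching the linear term gives $W_{L+1}W_{\Gamma\eq,L} = -K\lqr$, so the closed-loop matrix in $\mathcal{R}\eq$ is $A - B K\lqr$, whose spectral radius is strictly less than one by construction of the discrete-time LQR on the stabilizable pair $(A,B)$; this is exactly~\eqref{eq:prop_stability}.

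With those two conditions in hand, I would invoke Theorem~\ref{cor:asymptotic_stability_for_xin} directly: since~\eqref{eq:bias_annihilation} and~\eqref{eq:prop_stability} hold and the MILP~\eqref{eq:MILP_state} certifies $\mathcal{X}_{k,\text{out}}^* \subseteq \mathcal{R}_{\text{as}}$, the closed loop~\eqref{eq:closed_loop} is asymptotically stable for every $x_0 \in \xin$. This dispatches the second conclusion of the corollary without any additional work.

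For the optimality claim on $\mathcal{R}_{\text{as}}$, the idea is that $\mathcal{R}_{\text{as}}$ is a control-invariant subset of $\mathcal{R}\eq \cap \mathcal{R}_K$ (by the defining property~\eqref{eq:definition_stability_set}), so any trajectory starting at $x_0 \in \mathcal{R}_{\text{as}}$ remains in $\mathcal{R}_{\text{as}} \subseteq \mathcal{R}\eq$ for all $k \geq 0$, and therefore the neural network applies the parametric affine feedback $\mathcal{P}(x,\Gamma\eq,L+1;\theta) = -K\lqr x_k$ at every time step. Moreover, $\mathcal{R}_{\text{as}} \subseteq \mathcal{R}_K$ means no state or input constraint is active along the trajectory, so the infinite-horizon cost~\eqref{eq:objective_oc} for this trajectory coincides with the cost of applying the LQR feedback on the unconstrained problem, whose minimizer is precisely $u_k = -K\lqr x_k$. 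Hence the neural network controller is optimal with respect to~\eqref{eq:objective_oc} on $\mathcal{R}_{\text{as}}$.

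The main subtlety, and the only place that needs more than a one-line argument, is making the invariance step airtight: the optimality of LQR is a statement about an infinite trajectory, so we really do need $x_k \in \mathcal{R}_{\text{as}}$ for all $k \geq 0$, not just for the first few steps. This is guaranteed because $\mathcal{R}_{\text{as}}$ is by construction control-invariant under the neural network controller, which in turn is inherited from the standard fact that the LQR admissible region $\mathcal{R}\lqr$ (and hence $\mathcal{R}_K$) is positively invariant under $-K\lqr$. I would state this explicitly before invoking the equivalence with the unconstrained LQR cost, to make clear why the restriction to the unconstrained regime persists for the entire infinite horizon.
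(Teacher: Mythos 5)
Your proposal is correct and follows essentially the same route as the paper: verify that the LQR-equality hypothesis implies the conditions~\eqref{eq:bias_annihilation} and~\eqref{eq:prop_stability}, invoke Lemma~\ref{thm:asymptotic_stability} and Theorem~\ref{cor:asymptotic_stability_for_xin} with $K = K\lqr$ and $\mathcal{R}_K = \mathcal{R}\lqr$, and obtain optimality from the fact that the network reproduces the LQR feedback on $\mathcal{R}\eq \cap \mathcal{R}\lqr \supseteq \mathcal{R}_{\text{as}}$. Your only addition is to spell out the control-invariance of $\mathcal{R}_{\text{as}}$ so that the infinite-horizon optimality argument holds along the whole trajectory, a point the paper's proof leaves implicit.
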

\begin{proof}[Corollary~\ref{cor:lqr_asymptotic_stability}]
    The LQR state feedback~\eqref{eq:lqr_feedback} is the optimal solution w.r.t~\eqref{eq:objective_oc} for all $x \in \mathcal{R}\lqr$.
    If $\mathcal{P}(x,\Gamma\eq,L+1;\theta) = -K\lqr x$, the neural network controller provides the LQR feedback for all $x \in \mathcal{R}\eq$.
    Hence, the neural network controller returns an LQR optimal control input w.r.t~\eqref{eq:objective_oc} for all $x \in (\mathcal{R}\eq \cap \mathcal{R}\lqr)$.
    Because the LQR feedback is a special case of an asymptotically stabilizing feedback:
    \begin{align}
        K\lqr \in \mathcal{K} \coloneqq \{K \in \mathbb{R}^{n_u \times n_x} \, | \,\eqref{eq:prop_stability_lti} \},
    \end{align}
    the proof for asymptotic stability and convergence to the stability set $\mathcal{R}_{\text{as}}$ is analogous to the proof of Lemma~\ref{thm:asymptotic_stability} and Theorem~\ref{cor:asymptotic_stability_for_xin} by substituting $K$ with $K\lqr$ and $\mathcal{R}_K$ with $\mathcal{R}\lqr$.
    
\end{proof}
In general, a neural network controller does not satisfy the requirements~\eqref{eq:bias_annihilation} and~\eqref{eq:prop_stability} for asymptotic stability.
The next section shows an optimization-based method to ensure that such requirements are satisfied.

\section{LQR-optimized neural network controller}\label{sec:NN_LQR}

In this section, an optimization-based method is presented to modify a neural network such that it provides the same feedback as an LQR controller in the equilibrium region without changing the regions $\mathcal{R}_i$ and without retraining the neural network.
The goal is that the modified controller satisfies the conditions of Theorem~\ref{cor:lqr_asymptotic_stability}.
The main idea is to adapt the values of the weight and the bias in linear output layer $L+1$, because the regions $\mathcal{R}_i$ are only depending on the $L$ hidden layers, as~\eqref{eq:computation_hyperplanes} shows.
\begin{lemma}\label{lem:optimal_lqr_adaptation}
For every neural network controller $\mathcal{N}(x;\theta)$, it is possible to find values for the weights of the final layer, such that
\begin{align}\label{eq:optimal_solution_existence}
    \hat{W}_{L+1} W_{\Gamma\eq,L} = -K\lqr,
\end{align}
if the following requirements are satisfied:
\begin{subequations}\label{eq:existence_solution_lqr_adaptation}
    \begin{align}
        \text{rank}(A\eq) &= \text{rank}(\begin{bmatrix}A\eq &b\eq\\ \end{bmatrix}), \label{eq:existence_solution_lqr_adaptation_ranks} \\
        n_u n_{L} &\geq \text{rank}\left(A\eq\right), \label{eq:existence_lqr_adaptation_variables}
    \end{align}
\end{subequations}
where
\begin{align*}
    A\eq = \begin{bmatrix}
                    W_{\Gamma\eq,L}^T &0 &0 \\
                    0 &\ddots &0 \\
                    0 &0 &W_{\Gamma\eq,L}^T \\
                \end{bmatrix},
    b\eq =\begin{bmatrix}
                    -{K\lqr^{(i)}}^T \\
                    \vdots \\
                    -{K\lqr^{(n_u)}}^T \\
                \end{bmatrix}.
\end{align*}
\end{lemma}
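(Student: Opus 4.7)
The plan is to recast the matrix equation \eqref{eq:optimal_solution_existence} as a standard linear system of the form $A\eq z = b\eq$, where the vector of unknowns $z$ collects the entries of the sought-after last-layer weight matrix $\hat{W}_{L+1}$, and then appeal to the Rouch\'{e}--Capelli (Kronecker--Capelli) theorem to characterise when the system is consistent.

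First, I would transpose \eqref{eq:optimal_solution_existence} to obtain
$W_{\Gamma\eq,L}^T \hat{W}_{L+1}^T = -K\lqr^T$.
Reading this equation column by column, the $i$-th column of $\hat{W}_{L+1}^T$ (equivalently, the transpose of the $i$-th row $\hat{W}_{L+1}^{(i)}$) must satisfy
\begin{equation*}
W_{\Gamma\eq,L}^T \bigl(\hat{W}_{L+1}^{(i)}\bigr)^T = -\bigl(K\lqr^{(i)}\bigr)^T, \qquad i \in [n_u].
\end{equation*}
Since the $n_u$ subsystems involve disjoint sets of unknowns (each row of $\hat{W}_{L+1}$ appears in exactly one), they can be stacked to produce the block-diagonal system $A\eq z = b\eq$ with $A\eq$ and $b\eq$ exactly as in the lemma statement and $z = \bigl[(\hat{W}_{L+1}^{(1)}), \dots, (\hat{W}_{L+1}^{(n_u)})\bigr]^T \in \mathbb{R}^{n_u n_L}$. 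The claim then reduces to: this linear system has a solution if and only if \eqref{eq:existence_solution_lqr_adaptation} holds.

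Next, I would invoke Rouch\'{e}--Capelli: the system $A\eq z = b\eq$ is consistent precisely when $\text{rank}(A\eq) = \text{rank}([A\eq\ b\eq])$, which is exactly \eqref{eq:existence_solution_lqr_adaptation_ranks}. For \eqref{eq:existence_lqr_adaptation_variables}, I would note that $A\eq \in \mathbb{R}^{n_u n_x \times n_u n_L}$, so the number of unknowns $n_u n_L$ automatically upper-bounds $\text{rank}(A\eq)$; the inequality simply records that the degrees of freedom in $\hat{W}_{L+1}$ suffice to parametrise the (non-empty) solution set guaranteed by the consistency condition.

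Honestly, there is no deep obstacle here: the proof is a careful vectorisation followed by a textbook linear-algebra theorem. The only subtlety worth spelling out is the correspondence between the block-diagonal structure of $A\eq$ and the row-independence of the $n_u$ subsystems, which justifies why the stacking does not introduce spurious coupling and why the resulting feasibility condition decouples cleanly into \eqref{eq:existence_solution_lqr_adaptation_ranks}. The bias-cancellation condition \eqref{eq:bias_annihilation} needed in Lemma \ref{thm:asymptotic_stability} is \emph{not} addressed by this lemma and would be handled separately by choosing $b_{L+1}$ appropriately, e.g.\ $b_{L+1} = -\hat{W}_{L+1} b_{\Gamma\eq,L}$; this is a trivial assignment that does not enter the rank analysis.
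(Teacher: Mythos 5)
Your proposal is correct and follows essentially the same route as the paper: reformulate \eqref{eq:optimal_solution_existence} row by row as the block-diagonal linear system $A\eq w = b\eq$ in the stacked entries of $\hat{W}_{L+1}$ and invoke the standard consistency criterion (Rouch\'{e}--Capelli) corresponding to conditions~\eqref{eq:existence_solution_lqr_adaptation}. Your write-up simply spells out the vectorisation and the decoupling of the $n_u$ subsystems more explicitly than the paper does.
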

\begin{proof}[Lemma~\ref{lem:optimal_lqr_adaptation}]
    Equation~\eqref{eq:optimal_solution_existence} can be reformulated as a system of linear equations $A\eq w = b\eq$ with $w = [\hat{W}_{L+1}^{(1)}, \dots, \hat{W}_{L+1}^{(n_u)}]^T \in \mathbb{R}^{n_u n_{L}}$.
    A system of linear equations admits at least one solution if conditions~\eqref{eq:existence_solution_lqr_adaptation} are satisfied.
    
\end{proof}
We can now state another contribution of our paper in which we propose a systematic adaptation of the weights of the last layer to ensure that LQR performance is achieved in a region around the equilibrium point.
\begin{theorem}\label{cor:optimal_lqr_shift}
If conditions~\eqref{eq:existence_solution_lqr_adaptation} are satisfied, the solution of the following convex optimization problem: 
\begin{subequations}\label{eq:lqr_adaptation}
\begin{align}
    & \underset{\hat{W}_{L+1},\hat{b}_{L+1}}{\text{minimize}} & & \sum_{i,j = 1}^{n_u,n_{L}}  (\hat{W}_{L+1}^{(i,j)} - W_{L+1}^{(i,j)})^2  + \sum_{i=1}^{n_u} (\hat{b}_{L+1}^{(i)} - b_{L+1}^{(i)})^2 \label{eq:lqr_adaptation_objective}\\
& \text{subject to} & & \hat{W}_{L+1} W_{\Gamma\eq} = -K_{\text{lqr}}, \label{eq:lqr_K}\\
&&& \hat{W}_{L+1} b_{\Gamma\eq} + \hat{b}_{L+1} = 0, \label{eq:lqr_zero_bias}
\end{align}
\end{subequations}
provides the weight $\hat{W}_{L+1}^*$ and bias $\hat{b}_{L+1}^*$ for the last layer, such that $\mathcal{N}(x;\theta\lqr) = -K\lqr x$ for all $x \in \mathcal{R}\eq$ with $\theta_{l,\text{lqr}} = \theta_l$ for $l = [L]$ and $\theta_{L+1,\text{sat}} = \{\hat{W}_{L+1}^*,\hat{b}_{L+1}^*\}$ while minimizing the change in parameters of the last layer. 
\end{theorem}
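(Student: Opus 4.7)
The plan is to establish three facts in sequence: that program~\eqref{eq:lqr_adaptation} is feasible, that it is a strictly convex quadratic program so its unique optimum is well-defined and minimizes the total parameter change, and that any feasible point in particular realizes $\mathcal{N}(x;\theta\lqr) = -K\lqr x$ on the equilibrium region.

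For feasibility, I would invoke Lemma~\ref{lem:optimal_lqr_adaptation} directly: under the rank hypotheses~\eqref{eq:existence_solution_lqr_adaptation}, the linear system~\eqref{eq:lqr_K} in the unknown $\hat{W}_{L+1}$ admits at least one solution. Picking any such $\hat{W}_{L+1}$ and setting $\hat{b}_{L+1} = -\hat{W}_{L+1}\, b_{\Gamma\eq,L}$ immediately satisfies the zero-bias constraint~\eqref{eq:lqr_zero_bias}, since $\hat{b}_{L+1}$ is an unconstrained decision variable. Hence the feasible set of~\eqref{eq:lqr_adaptation} is non-empty.

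For the convex structure, I would observe that the objective~\eqref{eq:lqr_adaptation_objective} is the entrywise squared Euclidean distance between $(\hat{W}_{L+1},\hat{b}_{L+1})$ and the original parameters $(W_{L+1},b_{L+1})$; its Hessian is a positive multiple of the identity, so it is strictly convex in the decision variables. Constraints~\eqref{eq:lqr_K} and~\eqref{eq:lqr_zero_bias} are affine equalities, so the feasible set is a non-empty affine subspace. A strictly convex quadratic over a non-empty affine set admits a unique minimizer $(\hat{W}_{L+1}^*,\hat{b}_{L+1}^*)$, which is by definition the feasible point closest to the original last-layer parameters -- the precise sense of \emph{minimizing the change in parameters of the last layer}.

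Finally, to show that the adapted network coincides with the LQR feedback on $\mathcal{R}\eq$, I would first note that $\mathcal{R}\eq$ itself is unchanged by the adaptation: by~\eqref{eq:computation_hyperplanes}, the polytopic regions $\mathcal{R}_i$ depend only on the parameters of the hidden layers $\theta_1,\dots,\theta_L$, which are preserved by the assignment $\theta_{l,\text{lqr}} = \theta_l$ for $l \in [L]$. Then for every $x \in \mathcal{R}\eq$ the parametric representation~\eqref{eq:equilibrium_feedback} gives
\[
\mathcal{N}(x;\theta\lqr) = \hat{W}_{L+1}^*\bigl(W_{\Gamma\eq,L}\,x + b_{\Gamma\eq,L}\bigr) + \hat{b}_{L+1}^* = \bigl(\hat{W}_{L+1}^*\,W_{\Gamma\eq,L}\bigr)x + \bigl(\hat{W}_{L+1}^*\,b_{\Gamma\eq,L} + \hat{b}_{L+1}^*\bigr),
\]
and constraints~\eqref{eq:lqr_K} and~\eqref{eq:lqr_zero_bias} collapse this expression to $-K\lqr x$, as required. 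The only substantive step is the feasibility argument, which hinges on the rank hypotheses of Lemma~\ref{lem:optimal_lqr_adaptation}; the convexity and the final algebraic verification are routine consequences of the parametric description of the ReLU network.
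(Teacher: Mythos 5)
Your proposal is correct and follows essentially the same route as the paper's proof: feasibility of~\eqref{eq:lqr_K} via Lemma~\ref{lem:optimal_lqr_adaptation}, satisfaction of~\eqref{eq:lqr_zero_bias} by choosing $\hat{b}_{L+1} = -\hat{W}_{L+1} b_{\Gamma\eq,L}$, and convexity of the quadratic objective guaranteeing a minimal change of the last-layer parameters. Your additional remarks --- that the regions $\mathcal{R}_i$ (hence $\mathcal{R}\eq$) are unaffected because only the hidden layers enter~\eqref{eq:computation_hyperplanes}, and that strict convexity over the affine feasible set yields a unique minimizer --- are welcome elaborations of steps the paper leaves implicit, but do not constitute a different argument.
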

\begin{proof}[Theorem~\ref{cor:optimal_lqr_shift}]
    If the conditions~\eqref{eq:existence_solution_lqr_adaptation} are satisfied, Lemma~\ref{lem:optimal_lqr_adaptation} guarantees that~\eqref{eq:lqr_K} can always be satisfied.
    By setting $\hat{b}_{L+1} = -\hat{W}_{L+1}b_{\Gamma\eq}$,~\eqref{eq:lqr_zero_bias} is satisfied for every $\hat{W}_{L+1}$.
    This means that the feedback of the neural network controller in the equilibrium region is equal to the LQR feedback if the weight and the bias of the last layer are given by $\hat{W}_{L+1}^*$ and bias $\hat{b}_{L+1}^*$.
    Since the objective function~\eqref{eq:lqr_adaptation_objective} describes the change in the parameters of the last layer and the optimization is convex, the optimal solution guarantees that the change is minimal.
    
\end{proof}


\section{Case study: Double integrator}\label{sec:case_study}

\subsection{Control problem}
For the demonstration and visualization of the proposed approach, the discrete-time double integrator is considered with two states $x = [s, v]^T$, where $s$ is the position and $v$ is the velocity, and one input $u = a$, where $a$ is the acceleration.
The system matrices~\eqref{eq:lti} are derived via Euler discretization with a sampling time of $\Delta t =  \SI{0.1}{\second}$:
\begin{align*}
&A=
\begin{bmatrix}
0.5403   & -0.8415 \\ 
0.8415  & \hm0.5403 \\
\end{bmatrix},
&B=
\begin{bmatrix}
-0.4597 \\
\hm0.8415 \\
\end{bmatrix}.
\end{align*}
The polytopic state constraints $\mathcal{X}$ are given by
\begin{small}
\begin{align*}
&C_x=
\begin{bmatrix}
\hm1   & \hm0 \\ 
-1  & \hm0 \\
\hm0   & \hm1 \\
\hm0   & -1 \\
\end{bmatrix},
&c_x=
\begin{bmatrix}
5.0 \\
5.0 \\
5.0 \\
5.0 \\
\end{bmatrix},
\end{align*}
\end{small}
and the box constraints for the input are $u_{\text{lb}} = -1$ and $u_{\text{ub}} = 1$.
The weight matrices for the objective~\eqref{eq:objective_oc} are $Q = 2I$ and $R=1$.
The initial state space $\xin$ was computed with~\cite{herceg2013multi} for~\ref{eq:mpc} with $N=3$:

\begin{small}
\begin{align*}
&C_{\text{in}} =
\begin{bmatrix}
\hm0.0707  & -0.9975 \\ 
-0.1509  & -0.9885 \\ 
-0.8011  & -0.5984 \\
-0.9797  &  \hm0.2004 \\
\hm0.8776  & -0.4795 \\
\hm0.9797  & -0.2004 \\
\hm0.8012  &  \hm0.5984 \\
\hm0.1509  &  \hm0.9885 \\
-0.0707  &  \hm0.9975 \\
-0.8776  &  \hm0.4754 \\
\end{bmatrix},
&c_{\text{in}} =
\begin{bmatrix}
3.0297 \\
2.9401 \\
3.5051 \\
3.2918 \\
3.3082 \\
3.2918 \\
3.5051 \\
2.9401 \\
3.0297 \\
3.3082 \\
\end{bmatrix}.
\end{align*}
\end{small}

\subsection{LQR optimized neural network controller}

We trained a neural network with $L=1$ hidden layers and $n_L = 10$ neurons via imitation learning~\cite{karg2018efficient} with:
The network and the regions it defines via~\eqref{eq:unique_description_region} within $\xin$ are illustrated in Fig.~\ref{fig:neural_network}.
\begin{figure}[t]
\begin{center}
\includegraphics[width=0.99\columnwidth]{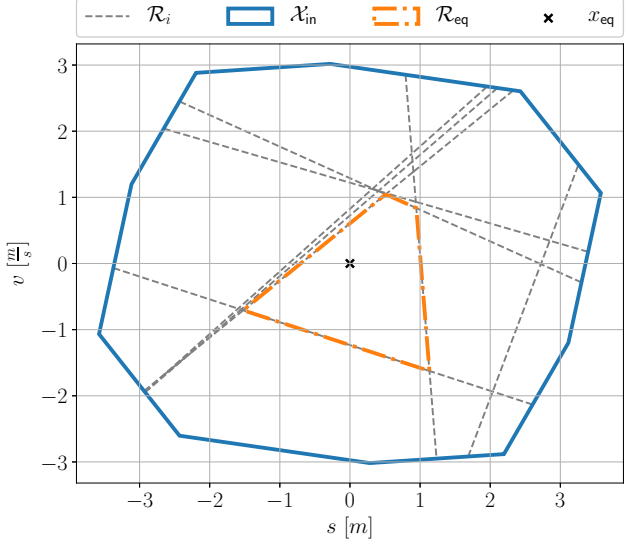}
\caption{Regions $\mathcal{R}_i$ defined by the neural network controller within $\xin$. The equilibrium region $\mathcal{R}\eq$ containing the equilibrium $x\eq$ is highlighted.}
\label{fig:neural_network}
\end{center}
\end{figure}
To find the weights and bias of the final layer providing LQR feedback for all $x \in \mathcal{R}\eq$, problem~\eqref{eq:lqr_adaptation} was solved with $K\lqr = \begin{bmatrix} 0.2501 &0.8290\\ \end{bmatrix}$.
The LQR adaptation~\eqref{eq:lqr_adaptation} was feasible with an optimal cost of \num{5.4836e-4}.
In Fig.~\ref{fig:trajectory} the performance of the two controllers $\mathcal{N}_{\text{sat}}(x)$ and $\mathcal{N}_{\text{sat,lqr}}(x)$, which were modified via Proposition~\ref{prop:output_saturation} to satisfy the box input constraints, is compared.
The closed-loop trajectories are very similar, as the low cost for LQR adaptation suggested, but only $\mathcal{N}_{\text{sat,lqr}}(x)$ drives the system asymptotically to the equilibrium, which can be seen in the magnified area. Note that our proposal solves an important problem of approximate MPC controllers as there is no set-point tracking error.
\begin{figure}[t]
\begin{center}
\includegraphics[width=0.99\columnwidth]{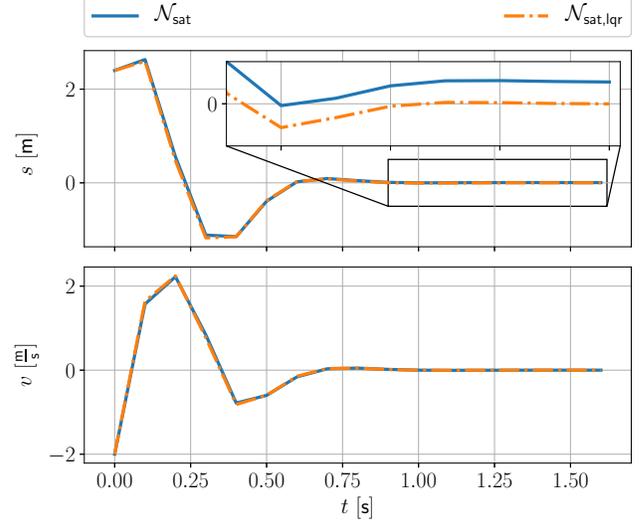}
\caption{Exemplary state trajectories of the closed-loop systems~\eqref{eq:closed_loop} for the LQR optimized controller $\mathcal{N}_{\text{sat},\text{lqr}}(x)$ and the original saturated controller $\mathcal{N}_ {\text{sat}}(x)$ for the same initial state $x_0 \in \xin$.}
\label{fig:trajectory}
\end{center}
\end{figure}

\subsection{Closed-loop constraint satisfaction}

In order to guarantee constraint satisfaction, the state and input constraints of the system need to be satisfied for all time steps $k$.
The one-step reachability analysis via~\eqref{eq:MILP_state} for the initial state space $\xin \subseteq \mathcal{X}$ guarantees that $\mathcal{X}_{1,\text{out}}^* \subset \xin$, as Fig.~\ref{fig:state_space} shows.
This means that $\xin$ is a control-invariant set and closed-loop constraint satisfaction is guaranteed for the saturated controller $\mathcal{N}_{\text{lqr,sat}}(x)$ by Lemma~\ref{lem:recursive_feasibility}.
\subsection{Asymptotic stability}
Asymptotic stability can be guaranteed if the closed-loop system converges to $\mathcal{R}_{\text{as}} \subseteq (\mathcal{R}\lqr \cap \mathcal{R}\eq)$ for all $x \in \xin$.
The stability set $\mathcal{R}_{\text{as}}$ is computed with~\cite{herceg2013multi} as the maximum control-invariant set with state constraints $x \in (\mathcal{R}\lqr \cap \mathcal{R}\eq)$ and given by $\mathcal{R}_{\text{as}} \coloneqq \{x \in \rnx \, | \, C_{\text{as}} x \leq c_{\text{as}} \}$ with
\begin{small}
\begin{align*}
&C_{\text{as}}=
\begin{bmatrix}
-0.3264  &   -0.9452 \\
-0.6533  &    \hm0.7571 \\
-0.2889  &   -0.9574 \\
\hm0.9971  &    \hm0.0759 \\
\hm0.8301  &   -0.5576 \\
-0.2888  &   -0.9574 \\
\hm0.4307  &    \hm0.9025 \\
\end{bmatrix},
&c_{\text{as}}=
\begin{bmatrix}
 1.1657 \\
 0.4590 \\
 1.1548 \\
 1.0070 \\
 1.1920 \\
 1.1549 \\
 1.1637 \\
\end{bmatrix}.
\end{align*}
\end{small}
The equilibrium region~\eqref{eq:equilibrium_region} of the controller is given by:
\begin{small}
\begin{align*}
&F\eq=
\begin{bmatrix}
-0.2527 & -0.7318 \\
\hm0.2646 & \hm0.0201 \\
-0.3536 &  \hm0.4097 \\
\hm0.3115 &  \hm0.6526 \\
\end{bmatrix},
&g\eq=
\begin{bmatrix}
0.9025 \\
0.2673 \\
0.2484  \\
0.8415
\end{bmatrix},
\end{align*}%
\end{small}%
and the LQR admissible set, computed with~\cite{herceg2013multi}, is defined by the polytope:
\begin{small}
\begin{align*}
&F\lqr=
\begin{bmatrix}
-0.6870  &  \hm0.24566 \\
\hm0.6870  & -0.2456 \\
-0.2501  & -0.8290 \\
\hm0.2501 &  \hm0.8290 \\
\end{bmatrix},
&g\lqr=
\begin{bmatrix}
1 \\
1 \\
1 \\
1 \\
\end{bmatrix}.
\end{align*}
\end{small}
The presented sets are all visualized in Fig.~\ref{fig:state_space}. 
By solving~\eqref{eq:MILP_state} with $k = 6$, we obtain $\mathcal{X}_{6,\text{out}}^* \subseteq \mathcal{R}_{\text{as}}$.
Because $R_6(\xin) \subseteq \mathcal{X}_{6,\text{out}}^*$ we can conclude asymptotic stability for all $x \in \xin$ according to Corollary~\ref{cor:lqr_asymptotic_stability}.
\begin{figure}[!t]
\begin{center}
\includegraphics[width=0.95\columnwidth]{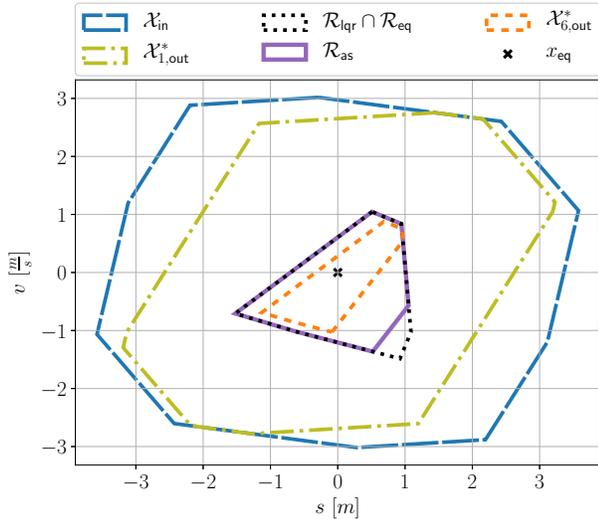}
\caption{Overview of the sets in the state space resulting from the process of verifying closed-loop constraint satisfaction and asymptotic stability via~\eqref{eq:MILP_state}.}
\label{fig:state_space}
\end{center}
\end{figure}
\section{Conclusion and future work}\label{sec:conclusions}
We have presented requirements for neural network controllers such that they guarantee asymptotically stable behaviour and lead to closed-loop constraint satisfaction.
These requirements can be verified via a mixed-integer linear programming scheme that represents the neural network controller.
Additionally, we have proposed an optimization-based modification of neural network controllers such that they satisfy the requirements necessary for asymptotically stabilizing behaviour. This modification can be performed after training the neural networks, avoiding the computationally expensive retraining of the controllers.

Future work includes the use of SDP relaxations instead of an MILP formulation to compute the sets that guarantee the safe application based on the neural network controller.

\bibliography{IEEEabrv,cdc_2020}

\end{document}